\definecolor{darkred}{rgb}{0.65,0.15,0}
\renewcommand*\env@matrix[1][\arraystretch]{%
  \edef\arraystretch{#1}%
  \hskip -\arraycolsep
  \let\@ifnextchar\new@ifnextchar
  \array{*\c@MaxMatrixCols c}}
\def\ie{{\it i.e.}}
\def\EWeight#1#2#3#4{\bigl({}^{\mathstrut}_{#1\mathstrut}{}_{#2\mathstrut}^{#4\mathstrut}{}_{#3\mathstrut}^{\mathstrut}\bigr)}
\def\ESixWeight#1#2#3#4#5#6{\EWeight{#1#2}#3{#4#5}#6}
\def\ESix#1#2#3#4#5#6{$\ESixWeight#1#3#4#5#6#2$}
\def\ESeven#1#2#3#4#5#6#7{$\EWeight{#7#6#5}#4{#3#1}#2$}
\def\EEight#1#2#3#4#5#6#7#8{$\EWeight{#8#7#6#5}#4{#3#1}#2$}
\def\DWeight#1#2#3{\bigl(\raise2.5pt\hbox{${}_{#1}$}{}^{#2}_{#3}\bigr)}
\def\AAWeight#1#2{\bigl(\raise0pt\hbox{${}^{#1}_{#2}$}\bigr)}
\def\GTwo#1#2{(#2#1)}
\def\fg{{\mathfrak g}}
\def\so{{\mathfrak{so}}}
\def\su{{\mathfrak{su}}}
\def\sp{{\mathfrak{sp}}}
\def\sl{{\mathfrak{sl}}}
\def\nn{\nonumber}
\def\*{\partial}
\def\Red#1{\textcolor{red}{#1}}
\def\rank{\hbox{rank}}
   \def\tr{\hbox{tr}}
\newtheorem{prop}{Proposition}
\newtheorem{lemma}{Lemma}
\newtheorem{cor}{Corollary}
\newtheorem{thm}{Theorem}
\begin{document}

\frenchspacing

\includegraphics[height=2cm]{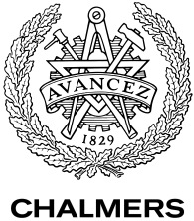}

\vspace{-14mm}
{\flushright Gothenburg preprint \\ 
November, 2023 \\ 
}

\vspace{4mm}

\hrule

\vspace{16mm}

\thispagestyle{empty}
\begin{center}
{\Large\bf\sc The chiral ring of $D=4$, ${\mathcal N}=1$ SYM\\[10pt]
with exceptional gauge groups}
   \\[14mm]
    
{\large
Martin Cederwall and Gabriele Ferretti}

\vspace{14mm}
 {
 {\it Department of Physics,
         Chalmers Univ. of Technology,\\
SE-412 96 Gothenburg, Sweden}}

\end{center}

\vfill

\begin{quote}
\textbf{Abstract:} 
The Cachazo--Douglas--Seiberg--Witten conjecture\cite{Cachazo:2002ry}, concerning the algebraic structure of the chiral ring in ${\mathcal N}=1$, $D=4$ supersymmetric Yang--Mills theory, is proven for exceptional gauge groups. 
This completes the proof of the conjecture. 
\end{quote} 

\vfill

\hrule

\noindent{\tiny email:
martin.cederwall@chalmers.se, gabriele.ferretti@chalmers.se}

\newpage


\noindent{\large\bf 1. Introduction}

\noindent Four-dimensional ${\mathcal{N}}=1$ supersymmetric pure Yang-Mills theories display a rich variety of phenomena of interest to physics. For any finite-dimensional, compact, simple Lie algebra\footnote{The global structure of the associated group will play no role in this work.} $\fg$, the dynamical fields consist of a gluon $A_\mu^a$ and a gluino $\lambda_\alpha^a$ combined into an anticommuting chiral superfield $W_\alpha^a$, where $\alpha=1,2$ is the $\sl(2)$ index and $a$ the adjoint index of $\fg$.

The theory is asymptotically free and possesses a discrete chiral symmetry ${\mathbb{Z}}_{2g^\vee}$, $g^\vee $ being the dual Coxeter number of $\fg$. This symmetry is the left-over of the axial $U(1)$ symmetry not explicitly broken by the ABJ anomaly.
The theory is also strongly believed to be confining and gapped with a number of supersymmetric vacua equal to $g^\vee $. 

It is of interest to consider the particularly well behaved ring of gauge invariant chiral superfields, consisting of all the gauge invariant local operators constructed out of  $W_\alpha^a$ (but not  $W^{\dagger a}_{\dot\alpha}$), modulo identification by operators containing the terms $f_{bc}{}^aW_\alpha^b W_\beta^c$ for any $\alpha$, $\beta$ and $a$, with $f_{bc}{}^a$ the structure constants of $\fg$.
The identification is needed because such terms are descendants in the chiral ring, \ie, can be written as $\{\bar Q_{\dot\alpha}, \Phi^{\dot\alpha}\}$ for some gauge invariant operator $\Phi^{\dot\alpha}$ and thus do not contribute to the correlation functions.

The most obvious element of the chiral ring is the invariant $S=\frac{1}{2}W^{\alpha a}W_{\alpha a}\equiv W^a_1W_{2a}$, which is non-zero by virtue of the anticommuting nature of $W_\alpha^a$. The overall normalization of $S$ is not relevant for this work. The classical chiral ring is constructed simply by treating $W^a_1$ and $W_2^a$ as classical Grassmann numbers.

The conjecture put forward in \cite{Cachazo:2002ry} by Cachazo, Douglas, Seiberg and Witten (hereafter CDSW), is that the classical chiral ring is generated by $S$ alone, together with the relation $S^{g^\vee}=0$ (but $S^{g^\vee-1}\neq 0$). This is important to physics because instanton corrections would then deform the classical chiral ring to $S^{g^\vee}=\Lambda^{3 g^\vee}$, where $\Lambda$ is the non-perturbative (holomorphic) scale. Factorization of chiral operators then implies that $S$ itself acquires a vacuum expectation value, breaking the ${\mathbb{Z}}_{2g^\vee}$ discrete chiral symmetry to ${\mathbb{Z}}_2$, thus giving rise to the $g^\vee$ vacua mentioned above.

The CDSW conjecture is a precisely formulated statement about Lie algebras that can be investigated independently of the currently incomplete understanding of the dynamics of quantum gauge theories. It has been proven for $\fg=A_n\equiv \su(n+1)$ already  in \cite{Cachazo:2002ry},  for the remaining classical Lie algebras  $\fg=B_n, C_n, D_n \equiv \so(2n+1), \sp(2n), \so(2n)$ in \cite{Witten:2003ye} and for $\fg=G_2$ in \cite{EtingofKac}. The first part of the conjecture, namely that the ring is generated by $S$ alone, has been shown for all $\fg$ in \cite{Kumar2004}. 

In this note we show that the CDSW conjecture holds for all the exceptional Lie algebras  $\fg=G_2, F_4, E_6, E_7, E_8$, thus completing the proof. (For completeness, this also includes the previous results of \cite{EtingofKac} and \cite{Kumar2004}.) 

\vspace{2\parskip}
\noindent{\large\bf 2. The proof}

\noindent For a given $\fg$, $\wedge\fg$ is a graded ring, generated by odd elements $X=X^aT_a\in\fg$ at level 1. 
For ease of notation, we let $W_1=X$, $W_2=Y$.
Let the graded ring $B$ be defined as $\wedge\fg/\langle[X,X]\rangle$, where the ideal is generated by 
$[X,X]=P^\fg X^2=-{1\over2g^\vee}f_{ab}{}^ef^{cd}{}_eX^aX^bT_c\wedge T_d$.
When we write $X^n$, this means ``wedge product'',
$X^n=X^{a_1}\ldots X^{a_n}T_{a_1}\wedge\ldots\wedge T_{a_n}$, which can then be projected on irreducible modules at a given level.

The ring $B$ has been studied and has many interesting properties \cite{Kostant2000}. Its decomposition into irreducible $\fg$-modules $\{r_i\}$ contains $2^r$ distinct (\ie, different) modules
\cite{PetersonUnpub}, where $r=\rank(\fg)$. 
They are in close correspondence to Abelian ideals in a Borel subalgebra of $\fg$ \cite{Kostant1998}.
In addition, the modules appearing at level $n$ (\ie, in $\wedge^n\fg$) are precisely those having a value of the quadratic Casimir operator $C_2$ which is $n$ times that of the adjoint. 
In Appendix \ref{ModulesApp}, we list the $r_i$'s by level for exceptional $\fg$. 
The method to compute them we found most efficient was the sequential use of selection by value of $C_2$.
The program LiE \cite{LiE} has been helpful in the calculations.

We then consider the graded ring $B\times B$, generated by $X$ and $Y$. In it, we will eventually divide out the ideal generated by $[X,Y]$, and denote the result $A=(B\times B)/\langle[X,Y]\rangle$. In particular, we will verify the 
CDSW conjecture \cite{Cachazo:2002ry} that the subring $A^\fg$ of $\fg$-invariants, the so-called classical chiral ring,
is generated by the scalar product $S=C_2(X,Y)$ (evaluated in some representation), and that furthermore
$S^{g^\vee-1}\neq0$, $S^{g^\vee}=0$. 
We will do this by first examining the subring $(B\times B)^\fg$ of $\fg$-invariants in $B\times B$,
finding an explicit set of generators for $(B\times B)^\fg$ in terms of Casimirs of $\fg$.
Our proof completely ignores the $\sl(2)$-covariance of $A$ and $\sl(2)$-invariance of $A^\fg$, which otherwise looks like an attractive starting point.

We now focus on the ring $(B\times B)^\fg$, with the purpose of finding its generators.
One obvious generator is $S$.

Since all $r_i$'s are distinct, and since non-self-conjugate $r_i$'s are accompanied by $r_{i'}=r^\star_i$ at the same level in $B$, all $\fg$-invariants are on the ``diagonal'', having bi-degree $(n,n)$. We call this level $n$ in $(B\times B)^\fg$.
One may equip $(B\times B)^\fg$ with a basis which is adapted to the irreducible modules $r_i$.
Given a module $r_i$ at level $n_i$, form the element $X_i=P_iX^{n_i}\in B$ by projection on $r_i$ in $\wedge^{n_i}\fg$.
All possible invariants are linear combinations of
$S_i=(X_i,Y^\star_i)=(X_i,Y^{n_i})=(X^{n_i},Y^\star_i)$, where $Y^\star_i=P^\star_iY^{n_i}$ is projected on $r^\star_i$.
The pairing $(\cdot,\cdot)$ on $B\times B$ is defined with the Cartan--Killing metric. Given elements
$\alpha={1\over n!}X^{a_1\ldots a_n}T_{a_1}\wedge\ldots\wedge T_{a_n}$ and the same for $\beta$ in terms of $Y$,
$(\alpha,\beta)={1\over n!}X^{a_1\ldots a_n}Y_{a_1\ldots a_n}$ by defining 
$(T_{a_1}\wedge\ldots\wedge T_{a_n},T^{b_1}\wedge\ldots\wedge T^{b_n})=n!\delta_{a_1\ldots a_n}^{b_1\ldots b_n}$.
Conversely, to an invariant 
$T=t_{a_1\ldots a_n}^{b_1\ldots b_n}X^{a_1}\ldots X^{a_n}Y_{b_1}\ldots Y_{b_n}$ we may associate an element in $B$:
\begin{align}
X_T=t_{a_1\ldots a_n}^{b_1\ldots b_n}X^{a_1}\ldots X^{a_n}
T_{b_1}\wedge\ldots\wedge T_{b_n}\;.
\label{XSnEq}
\end{align}
For example,
$X_{S^n}=(-1)^{n(n-1)\over2}X^n$.

\begin{lemma}\label{ProductLemma}
The multiplication table of irreducible modules in $B$ and that of $(B\times B)^\fg$ contain the same structure constants.
Let $X_iX_j=\sum_kc_{ij}{}^kX_k$. Then,
$S_iS_j=(-1)^{n_in_j}\sum_kc_{ij}{}^kS_k$.
\end{lemma}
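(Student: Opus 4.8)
The plan is to realise the evident linear bijection $X_i\leftrightarrow S_i$ between $B$ and $(B\times B)^\fg$ — available because both spaces are indexed by the same set of distinct modules $r_i$ — as an algebra map twisted by the sign $(-1)^{n_in_j}$, and to extract that twist from the Grassmann reordering of the $X$'s and $Y$'s. The one fact I would lean on throughout is already recorded in the text: since the Cartan--Killing pairing is $\fg$-invariant and the decomposition of $B$ is multiplicity-free (all $r_i$ distinct), pairing $X_i\in r_i$ against the full element $Y^{n_i}$ kills every isotypic component except $r_i^\star$, so that $S_i=(X_i,Y^\star_i)=(X_i,Y^{n_i})$, and likewise $(X_k,Y^{n_k})=S_k$ for every $k$.

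First I would establish the single identity
\begin{align}
S_iS_j=(-1)^{n_in_j}\,(X_iX_j,\,Y^{n_i+n_j})\;.
\end{align}
Writing each factor through its explicit Grassmann coefficients, the scalar $S_iS_j$ is a product whose odd factors occur in the order $X^{(i)}Y^{(i)}X^{(j)}Y^{(j)}$, where $X^{(i)}$, $Y^{(i)}$ collect the $n_i$ anticommuting $X$'s and $Y$'s entering $S_i$, and similarly for $j$. The right-hand pairing instead groups all the $X$'s and all the $Y$'s, i.e. it is built from the order $X^{(i)}X^{(j)}Y^{(i)}Y^{(j)}$. Passing between the two orderings only requires moving the $n_i$ odd factors $Y^{(i)}$ past the $n_j$ odd factors $X^{(j)}$, which produces exactly the sign $(-1)^{n_in_j}$ and nothing else. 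Because a single set of odd generators $X$ (and a single set $Y$) is used throughout, the wedge product $X_iX_j$ is just the concatenation of the two Grassmann monomials, so the antisymmetrisation and the counting are automatic and the $Y$-factors reassemble into $Y^{n_i+n_j}$ with the normalisations of the pairing and of $Y^n=Y^{a_1}\cdots Y^{a_n}T_{a_1}\wedge\cdots\wedge T_{a_n}$ conspiring to leave no combinatorial factor.

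The conclusion is then immediate by bilinearity of the pairing: inserting $X_iX_j=\sum_kc_{ij}{}^kX_k$ gives $(X_iX_j,Y^{n_i+n_j})=\sum_kc_{ij}{}^k(X_k,Y^{n_k})=\sum_kc_{ij}{}^kS_k$, so that $S_iS_j=(-1)^{n_in_j}\sum_kc_{ij}{}^kS_k$. This proves the displayed relation and, at the same time, the assertion that the two multiplication tables carry the identical structure constants $c_{ij}{}^k$.

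I expect the only genuine work to be the bookkeeping behind the displayed identity: one must confirm that recombining the separated $Y$-factors of $S_i$ and $S_j$ into a single $Y^{n_i+n_j}$ introduces neither a stray binomial coefficient nor a conjugation of the constants. The essential trick is to route the computation through the \emph{one} pairing $(X_iX_j,Y^{n_i+n_j})$ rather than decomposing both the $X$- and the $Y$-products into irreducibles and contracting; the latter would spuriously seem to let structure constants from both sides enter, whereas the former isolates the sign $(-1)^{n_in_j}$ as the sole nontrivial effect. Here the multiplicity-freeness of $B$ is used crucially, both to keep the pairing diagonal on the distinct modules and to make the correspondence $r_i\leftrightarrow S_i$ unambiguous.
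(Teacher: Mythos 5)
Your proposal is correct and follows essentially the same route as the paper: the paper's proof is precisely the chain $S_iS_j=(X_i,Y^{n_i})(X_j,Y^{n_j})=(-1)^{n_in_j}(X_iX_j,Y^{n_i+n_j})=(-1)^{n_in_j}\sum_kc_{ij}{}^k(X_k,Y^{n_k})$, with the sign coming from the Grassmann reordering and the last step from bilinearity and $(X_k,Y^{n_k})=S_k$. You have merely spelled out the justifications (the reordering count and the role of multiplicity-freeness) that the paper leaves implicit.
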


\begin{proof}
\begin{align}
S_iS_j&=(X_i,Y^{n_i})(X_j,Y^{n_j})=(-1)^{n_in_j}(X_iX_j,Y^{n_i+n_j})\nn\\
&=(-1)^{n_in_j}\sum_k(c_{ij}{}^kX_k,Y^{n_k})=(-1)^{n_in_j}\sum_kc_{ij}{}^kS_k\;.
\end{align}
\end{proof}


\begin{cor}\label{SnCor}
Since (with suitable normalisation) $X_S=X$, $S^n\neq0$ for all non-empty levels $n$ in $(B\times B)^\fg$.
$(X_{S^{n_i}},Y_i)\neq0$. 
When expressed in terms of $X_T$'s, where $T$'s are products of generators of $(B\times B)^\fg$, every irreducible component $X_i$ has a non-vanishing coefficient for $X_{S^{n_i}}$.
\end{cor}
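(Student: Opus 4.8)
The plan is to transport all three claims into the ring $B$ via the correspondence $T\mapsto X_T$ of \eqref{XSnEq} and Lemma \ref{ProductLemma}, using the identity $X_{S^n}=(-1)^{n(n-1)/2}X^n$ recorded just above. For the first claim it then suffices to show $X^n\neq0$ in $B$ at every non-empty level $n$. I would note that $X^n$ is the canonical (``diagonal'') element of $\wedge^n\fg$ built from the Grassmann coefficients $X^a$, i.e. the image of the identity under the isomorphism $\wedge^n\fg\cong(\wedge^n\fg)^\star$; since $B$ is generated in degree $1$ (being a quotient of $\wedge\fg$), its graded piece $B_n$ is the image of this element, so $X^n\neq0$ exactly when $B_n\neq0$, that is, when level $n$ is non-empty. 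As $T\mapsto X_T$ is linear, $S^n=0$ would force $X_{S^n}=\pm X^n=0$, so $S^n\neq0$.

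For the second claim I would split $X^{n_i}=\sum_jX_j$ into its projections onto the (multiplicity-free) irreducible summands $r_j$ at level $n_i$, whence $(X_{S^{n_i}},Y_i)=\pm(X^{n_i},Y_i)=\pm\sum_j(X_j,Y_i)$. The Cartan--Killing pairing is $\fg$-invariant, so by Schur it is nondegenerate between $r_j$ and $r_j^\star$ and vanishes on every other pair; only $j=i^\star$ survives, giving $(X_{S^{n_i}},Y_i)=\pm(X_{i^\star},Y_i)$. Since $r_i^\star$ also occurs at level $n_i$, both $X_{i^\star}=P_{i^\star}X^{n_i}$ and $Y_i=P_iY^{n_i}$ are nonzero canonical elements of the dual pair $r_i^\star,r_i$, and their pairing is the nonvanishing expression $\sum_\mu X^{(\mu)}Y_{(\mu)}$ over dual bases; hence $(X_{S^{n_i}},Y_i)\neq0$.

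The third claim is the substantive one. Its structural backbone is that, because $\sum_jP_j=\mathrm{id}$ on $B_{n_i}$, the pure power $X_{S^{n_i}}=\pm X^{n_i}=\pm\sum_{j\in I_{n_i}}X_j$ is the unique product-of-generators at level $n_i$ whose expansion in the module basis $\{X_j\}$ meets every component with a nonzero ($\pm1$) coefficient. I would select a basis of $(B\times B)^\fg$ among products of generators that includes each pure power $S^{n_i}$, push the resulting change of basis between $\{X_i\}$ and $\{X_T\}$ through Lemma \ref{ProductLemma}, and argue by induction on the level that it is triangular with respect to the filtration by the number of non-$S$ generators occurring in $T$, with $S^{n_i}$ as the unique purely-$S$ (``leading'') term. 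Triangularity with nonzero leading entries then forces the coefficient of $X_{S^{n_i}}$ in $X_i$ to be nonzero.

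The main obstacle I anticipate is exactly this last point: establishing triangularity with a nonvanishing leading coefficient. Claims one and two only certify that individual elements and individual pairings survive, whereas here one must preclude the cancellation of a single entry of an $|I_{n_i}|\times|I_{n_i}|$ change of basis once all the signs $(-1)^{n_in_j}$ and all the structure constants $c_{ij}{}^k$ of Lemma \ref{ProductLemma} are taken into account. I would lean on the multiplicity-free property of $B$, which makes every relevant invariant pairing one-dimensional so that nonvanishing reduces to a single scalar, together with a choice of ordering making $S^{n_i}$ genuinely extremal; carrying this out level by level against the explicit module lists of Appendix \ref{ModulesApp} for each exceptional $\fg$ is where the real effort lies.
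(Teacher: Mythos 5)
Your handling of the first two statements is correct and essentially the paper's own: $X_{S^n}=\pm X^n$ by \eqref{XSnEq}, which is nonzero precisely when $B_n\neq0$ because the coefficients of $X^n$ span $B_n$; and, since the modules in $B_{n_i}$ are distinct and occur together with their conjugates, Schur's lemma collapses $(X_{S^{n_i}},Y_i)$ to the nondegenerate pairing of $r_i^\star$ with $r_i$, i.e.\ to $\tr P_i=\dim r_i\neq0$.

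The third statement --- the only substantive one, and the one the paper flags as essential for Proposition \ref{ShProp} --- is where your proposal has a genuine gap, which you acknowledge yourself: you reduce everything to triangularity of the change of basis between $\{X_i\}$ and $\{X_T\}$ with respect to a filtration by the number of non-$S$ factors in $T$, you cannot establish that the leading (purely-$S$) entry is nonzero, and you propose to fall back on a level-by-level comparison with the tables of Appendix \ref{ModulesApp}. That is not a proof, and for $E_7$ and $E_8$ it is not a feasible computation either. It also bypasses the mechanism the paper actually invokes, which is the same trace that settled your second claim. At level $n$ the invariants of bidegree $(n,n)$ correspond to $\fg$-equivariant endomorphisms of $B_n$, with $S_i\leftrightarrow P_i$ and $S^n\leftrightarrow\pm\,\mathrm{id}_{B_n}$; the coefficient of $X_{S^{n_i}}$ in the expansion of $X_i$ is then read off by tracing against the identity, where $P_i$ contributes $\dim r_i\neq0$. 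This is exactly the paper's one-line justification (``any projection operator $P_i$ on an $r_i$ has a non-vanishing trace with the identity''), and it replaces your ordering and case analysis by a single one-dimensional linear functional, so no cancellation among the structure constants of Lemma \ref{ProductLemma} can occur. What remains implicit even in the paper --- and what your filtration was groping towards --- is that the products of generators other than the pure power $S^{n_i}$ contribute tracelessly, each of them carrying at least one commutator $[X,Y]$; if you want a complete argument, that orthogonality to the identity is the one point you should isolate and prove, rather than the triangularity of the full change of basis.
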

The last statement, which will be important for the proof, amounts to the fact that any projection operator $P_i$ on an $r_i$ has a non-vanishing trace with the identity.

With the input from Appendix  \ref{ModulesApp}, we can form the partition function of the ring $(B\times B)^\fg$, which also counts the number of irreducible modules $b_n$ at each level $n$ in $B$, and investigate for patterns. We write the partition function as
\begin{align}
Z_B(t)=\sum_{n=0}^Nb_nt^n\;,
\end{align} 
where $N$ is the highest level in $B$.
One striking property is that the first occurrence of $b_n>1$ is at a level which is one less than the order of the next (after $C_2$) Casimir operator of $\fg$. (That is, if this happens at all. In very simple cases, like $G_2$, the ring may end before that level is reached. Then, $b_n=1$, $n=0,\ldots,N$.)
This observation goes much further. Let us define a ring $C$ with generators at level $n$ only if $n+1\in\Gamma$, where $\Gamma$ is the set of integers $m$ for which $C_m$ is a Casimir of $\fg$
(see Appendix \ref{ModulesApp}). Let there be no relations in $C$. Then $C$ has the partition function\footnote{Note that due to the shift in degree by one unit, $C\not\simeq U(\fg)^\fg$ as a graded ring.}
\begin{align}
Z_C(t)=\prod_{n+1\in\Gamma}(1-t^n)^{-1}=\sum_{n=0}^\infty c_nt^n\;.
\end{align}
We notice, by direct comparison of the two partition functions, that they agree up to level $g^\vee-1$ for all exceptional $\fg$, \ie, 
$b_n=c_n$ for $n\leq g^\vee-1$ and $b_n<c_n$ for $n\geq g^\vee$. 
Analogous comparisons can be made for classical matrix algebras. There, we have verified the same behaviour up to rank $10$, which convinces us that it holds in general, although we have no proof.

For example, we have for $\fg=E_6$,
\begin{align}
Z_B(t)&=1 + t + t^2 + t^3 + 2t^4 + 3t^5 + 3t^6 +4t^7 + 6t^8 + 7t^9 + 8t^{10}+10t^{11} \nn\\ 
      &+7t^{12} +4t^{13} +2t^{14} +2t^{15} +2t^{16}\;,\nn\\
Z_C(t)&=\prod_{n+1\in\{2,5,6,8,9,12\}}(1-t^n)^{-1}\\&=1 + t + t^2 + t^3 + 2t^4 + 3t^5 + 3t^6 +
      4t^7 + 6t^8 + 7t^9 + 8t^{10}+10t^{11} \nn\\ 
      &+13t^{12} +15t^{13} +17t^{14} +21t^{15} +26t^{16}+O(t^{17})\nn\;,
\end{align}
showing agreement up to level $g^\vee-1=11$.

Having observed that the partition functions agree up to level $g^\vee-1$, we will proceed to show that the rings themselves agree to that level. Finding a concrete expression for the generators of $(B\times B)^\fg$ will allow us to state that $(B\times B)^\fg\simeq C/J$ where the ideal is empty below level $g^\vee$.

How can invariants be constructed? Let $\xi$ and $\eta$ be $X$ and $Y$ in some (any) representation $R$ with representation matrices $t_a$, \ie, the matrices $\xi=X^at_a$, $\eta=Y^at_a$. Since $\xi^2=0=\eta^2$ (with matrix multiplication)\footnote{But remember that the ideal generated by $[X,Y]$ is not yet divided out, so $\xi\eta+\eta\xi\neq0$.}, all invariants can be written as traces of alternating $\xi$'s and $\eta$'s:
$\tr(\xi\eta\xi\eta\ldots\xi\eta)$, and products of such expressions. 
It is not yet obvious from such expressions when new independent invariants occur.
The identities $\xi^2=\eta^2=0$ can be used  to rewrite  $\tr((\xi\eta)^n)=\tr(\xi,\eta,[\xi,\eta]^{n-1})$, where $[\xi,\eta]=\xi\eta+\eta\xi$. 
Due to the Jacobi identities (or by just evaluating the matrix products) $[\xi,[\xi,\eta]]=0=[[\xi,\eta],\eta]$. This implies that 
$\tr(\xi,\eta,[\xi,\eta]^{n-1})$ is symmetric in all its $n+1$ entries.
We are looking for invariants in the totally symmetric product of a number of adjoint elements. Such invariants are built from Casimir operators.

Using the Casimir operators of $\fg$, we can construct the generators of $(B\times B)^\fg$ as
\begin{align}
S_{(n)}=C_{n+1}(X,Y,\underbrace{[X,Y],\ldots,[X,Y]}_{n-1})\}\;,\label{SnDef}
\end{align}
where $[X,Y]^a=f_{bc}{}^aX^bY^c$ and $C_m$ are Casimir operators of $\fg$. The expressions should be evaluated in some representation.
Due to symmetry of the Casimir, at most one ``naked'' $X$ and one $Y$ can be among its entries.
There is no way of rewriting $S_{(n)}$ using the identities for invariance. This is since the Jacobi identities 
imply $[X,[X,Y]]=0=[[X,Y],Y]$. 
 It is then also clear from invariance that $C_n([X,Y],\ldots,[X,Y])=0$, so the naked $X$ and $Y$ are needed.
There is an arbitrariness in defining each higher Casimir in that one may add some product of lower ones. One possible canonical choice is that contractions with lower Casimirs vanish. This is however irrelevant in our case; the multi-trace terms added give zero in 
eq. \eqref{SnDef}, since the naked $X$ and $Y$ would be needed in every factor. 
There is of course an arbitrariness in defining the generator at a given level by addition of some constant times products of lower generators. This is unconnected to the choice of representatives for the Casimir invariants, which provide a canonical choice. We also note that if one tries to form an invariant at a level $n$ where there is no Casimir $C_{n+1}$, the single trace $\tr(\xi,\eta,[\xi,\eta]^{n-1})$ is expressible as a sum of products of lower traces, and vanishes identically. The absence of possible further generators of $(B\times B)^\fg$ establishes that $(B\times B)^\fg=C/J$, where the ideal $J$ is empty below level $g^\vee$.

We can thus state:

\begin{prop}\label{CnProp}
The subring of $\fg$-invariants in $B\times B$, $(B\times B)^\fg$, is generated by the set 
\begin{align}
\{S_{(n)}=C_{n+1}(X,Y,\underbrace{[X,Y],\ldots,[X,Y]}_{n-1})\,:\; n+1\in\Gamma\}\;.
\end{align}
The first relations between products of $S_{(n)}$'s appear at level $g^\vee$.
\end{prop}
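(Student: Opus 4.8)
The plan is to prove the two assertions of the proposition separately: first that the listed $S_{(n)}$ generate $(B\times B)^\fg$, and then that the ideal of relations among them is empty below level $g^\vee$ but non-trivial at $g^\vee$. The generation claim I would establish by the trace reduction already outlined: in any representation $R$ one has $\xi^2=\eta^2=0$, so every $\fg$-invariant of $B\times B$ is a polynomial in the single-trace invariants $\tr((\xi\eta)^n)$, and it remains only to evaluate each such single trace. The relation claim I would then obtain by recasting generation as a graded surjection out of the free ring $C$ and comparing partition functions, using the distinctness of the modules $r_i$ to identify the partition function of the target with $Z_B$.

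For the single traces, rewrite $\tr((\xi\eta)^n)=\tr(\xi,\eta,[\xi,\eta]^{n-1})$, which by $[\xi,[\xi,\eta]]=0=[[\xi,\eta],\eta]$ is totally symmetric in its $n+1$ adjoint entries and is therefore the evaluation of a symmetric invariant tensor of degree $n+1$ on the multiset $\{X,Y,[X,Y]^{n-1}\}$. Since the symmetric invariants of a simple $\fg$ form a polynomial algebra on the Casimirs $C_m$, $m\in\Gamma$, it suffices to contract a product of Casimir tensors $C_{a_1}\cdots C_{a_k}$ with this single multiset and decide which terms survive. A Casimir factor fed only $[X,Y]$'s vanishes by invariance; a factor fed an unequal number of naked $X$'s and $Y$'s is an off-diagonal invariant, hence vanishes because all invariants of $B\times B$ sit at bi-degree $(n,n)$; so a surviving factor must contain exactly one naked $X$ and one naked $Y$. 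As there is only a single $X$ and a single $Y$ to distribute, at most one factor can survive, and every term with $k\geq2$ vanishes. Thus each single trace collapses to a multiple of the primitive contribution, namely $S_{(n)}$ when $n+1\in\Gamma$, and vanishes identically when $n+1\notin\Gamma$; Corollary \ref{SnCor} guarantees the surviving $S_{(n)}$ are non-zero. Hence $\{S_{(n)}:n+1\in\Gamma\}$ generates $(B\times B)^\fg$.

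This yields a surjection $\phi\colon C\twoheadrightarrow(B\times B)^\fg$ of graded rings sending the free generator of $C$ at level $n$ to $S_{(n)}$, with kernel the relation ideal $J$. To locate $J$ I would compare graded dimensions. Because all $r_i$ are distinct and conjugate modules occur at the same level, each $r_i$ at level $n$ yields exactly one invariant $S_i=(X_i,Y^\star_i)$, so $\dim(B\times B)^\fg_n=b_n$ and the partition function of the target is $Z_B$; on the source $\dim C_n=c_n$. For $n\le g^\vee-1$ one has $b_n=c_n$, so the surjection $\phi$ is an isomorphism in each such degree and $J$ is empty below level $g^\vee$. Since $b_{g^\vee}<c_{g^\vee}$, the map $\phi$ must acquire a kernel at level $g^\vee$, so the first relation appears precisely there.

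The main obstacle is the numerical input $b_n=c_n$ for $n\le g^\vee-1$, with strict inequality at $g^\vee$: this is exactly what pins down the onset of relations, yet I see no uniform representation-theoretic argument for the coincidence of the two partition functions. For the exceptional algebras I would therefore settle it by the explicit decompositions of $B$ into irreducible modules collected in Appendix \ref{ModulesApp}, generated by repeated selection according to the value of the quadratic Casimir. At that step the proof becomes a finite, case-by-case verification rather than a conceptual one, resting on the tabulated module content rather than on a general theorem.
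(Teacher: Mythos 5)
Your proposal is correct and follows essentially the same route as the paper: reduction of all invariants to single traces via $\xi^2=\eta^2=0$, total symmetry of $\tr(\xi,\eta,[\xi,\eta]^{n-1})$, the description of symmetric invariants by Casimirs with product (multi-trace) terms killed because a naked $X$ and $Y$ would be needed in every factor, and the onset of relations located by comparing $Z_B$ with $Z_C$. Your bi-degree argument for discarding factors with unequal numbers of naked $X$'s and $Y$'s usefully makes explicit a step the paper only asserts, and your closing caveat matches the paper exactly — the equality $b_n=c_n$ below level $g^\vee$ is established there only by direct inspection of the tabulated module content in the appendix.
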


When we form the ring $A^\fg$, it will be $(B\times B)^\fg/I$ for some ideal $I$. We see that the generators $S_{(n)}$, $n>1$ belong to $I$. Conversely, any element in $I$ will consist of terms, each containing a trace with at least one $[\xi,\eta]$, which again is expressible using $S_{(n)}$, $n>1$. So, $I$ is generated by all generators of $(B\times B)^\fg$ except $S$.

\begin{cor}
$A^\fg$ is generated by
$S$, and $S^{g^\vee-1}\neq0$ in $A^\fg$. 
\end{cor}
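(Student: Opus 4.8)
The plan is to obtain the corollary as a direct consequence of Proposition~\ref{CnProp} and the explicit description of the ideal $I$ given just above it, with the single piece of real content being a grading argument for the non-vanishing of $S^{g^\vee-1}$.

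First I would settle the generation statement, which is essentially formal. By Proposition~\ref{CnProp}, $(B\times B)^\fg$ is generated by the set $\{S_{(n)}:n+1\in\Gamma\}$, among which $S=S_{(1)}$ sits at level $1$. Since $A^\fg=(B\times B)^\fg/I$ and $I$ is generated precisely by the $S_{(n)}$ with $n>1$, passing to the quotient annihilates every generator except $S$. Hence the image of $S$ generates $A^\fg$, exhibiting $A^\fg$ as a quotient of the polynomial ring in the single variable $S$.

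The substantive claim is that $S^{g^\vee-1}$ survives this quotient, and here I would lean entirely on the grading. The partition-function comparison gives $b_n=c_n$ for $n\leq g^\vee-1$, and Proposition~\ref{CnProp} tells us the first relations among the $S_{(n)}$ occur only at level $g^\vee$; equivalently $(B\times B)^\fg=C/J$ with $J$ empty below level $g^\vee$. Thus, through level $g^\vee-1$, the ring $(B\times B)^\fg$ coincides with the free ring $C$ on the generators $\{S_{(n)}\}$, and in this range Corollary~\ref{SnCor} already certifies $S^{g^\vee-1}\neq0$ there. The pure power $S^{g^\vee-1}=S_{(1)}^{g^\vee-1}$ lies exactly at level $g^\vee-1$ and involves none of the generators $S_{(n)}$, $n>1$, that generate $I$; meanwhile the degree-$(g^\vee-1)$ part of $I$ is spanned by monomials each carrying at least one factor $S_{(n)}$ with $n>1$. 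In the free ring these are linearly independent from $S_{(1)}^{g^\vee-1}$, so the latter is not contained in $I$ and its image in $A^\fg$ is non-zero.

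The one place the argument could slip, and hence the point I would state most carefully, is that all the identifications here — $(B\times B)^\fg=C/J$ and $A^\fg=(B\times B)^\fg/I$ — are identifications of graded rings. Because the construction is graded and $S^{g^\vee-1}$ sits strictly below the level $g^\vee$ at which the first relation lives, no relation in $J$ (nor any higher-level content of $I$) can bear on it, and its non-vanishing is governed entirely by the free part of the ring, where it is manifest. I would stress that the real weight of the proof has already been discharged upstream, in Proposition~\ref{CnProp} and the level-by-level data of Appendix~\ref{ModulesApp}, and that this corollary is the clean harvest of that work; the complementary vanishing $S^{g^\vee}=0$, which the statement does not assert, is instead the one fact that would require locating the first relation at level $g^\vee$ itself and checking — via the trace statement of Corollary~\ref{SnCor} — that it carries a non-zero $S^{g^\vee}$ component.
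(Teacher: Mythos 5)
Your argument is correct and is essentially the one the paper intends: generation by $S$ follows because the quotient by $I$ kills all other generators $S_{(n)}$, $n>1$, and the non-vanishing of $S^{g^\vee-1}$ follows from the grading together with Proposition~\ref{CnProp}, since below level $g^\vee$ the ring is free on the $S_{(n)}$ and the monomial $S^{g^\vee-1}$ is therefore linearly independent of the degree-$(g^\vee-1)$ part of $I$, which consists of monomials containing some $S_{(n)}$ with $n>1$. The paper leaves this spelled out only implicitly after describing $I$, so your write-up matches its route.
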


However, we still need to check that the relations in $(B\times B)^\fg$ involve $S^{g^\vee}$ in a way that implies that it vanishes in $A^\fg$. 
This is the final step in the proof of the CDSW conjecture for exceptional $\fg$.


\begin{prop}\label{ShProp}
$S^{g^\vee}=0$ in $A^\fg$.
\end{prop}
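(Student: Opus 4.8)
The plan is to fuse the two quotient presentations already in hand into a single presentation of $A^\fg$ in the one generator $S$, and then to pin down the relation that annihilates $S^{g^\vee}$. By Proposition \ref{CnProp} we have $(B\times B)^\fg = C/J$ with $J$ empty below level $g^\vee$, and, as recorded in the discussion following Proposition \ref{CnProp}, $A^\fg = (B\times B)^\fg/I$ with $I=\langle S_{(n)}:n>1\rangle$. Writing $I_C=\langle S_{(n)}:n>1\rangle\subset C$ and $\pi\colon C\to C/I_C=k[S]$ for the map (over the ground field $k$) that kills every monomial containing a higher generator and fixes each power $S^n$, these combine to
\begin{align}
A^\fg\;=\;C/(J+I_C)\;\cong\;k[S]/\bar J,\qquad \bar J:=\pi(J)\;.\nn
\end{align}
Since $J$ has no element below level $g^\vee$, we get $\bar J\subseteq(S^{g^\vee})$, so $A^\fg\cong k[S]/(S^{g^\vee})$ precisely when $S^{g^\vee}\in\bar J$. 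The whole proposition therefore reduces to the single assertion that \emph{at least one level-$g^\vee$ relation of $(B\times B)^\fg$ carries a nonzero coefficient on the monomial $S^{g^\vee}$}.

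Two facts frame this. First, a relation does exist at level $g^\vee$: from the partition functions $b_{g^\vee}<c_{g^\vee}$, so $\dim J_{g^\vee}=c_{g^\vee}-b_{g^\vee}\geq1$. Second, $S^{g^\vee}$ genuinely touches every channel at that level. Indeed $X_{S^{g^\vee}}=\pm X^{g^\vee}=\pm\sum_iX_i$ with $X_i=P_iX^{g^\vee}$, so that $S^{g^\vee}=\pm\sum_iS_i$, the sum running over all modules $r_i$ occurring at level $g^\vee$; by Corollary \ref{SnCor} every coefficient here is nonzero (the ``nonvanishing trace with the identity''). Equivalently, in the basis $\{S_i\}$ of $(B\times B)^\fg_{g^\vee}$ the vector $S^{g^\vee}$ has all coordinates equal to $\pm1$, while $I_{g^\vee}$ is spanned by the remaining $c_{g^\vee}-1$ monomials, each carrying a factor $S_{(n)}$ with $n>1$. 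The goal $S^{g^\vee}\in\bar J$ is then the same as $S^{g^\vee}\in I_{g^\vee}$, i.e.\ $\dim I_{g^\vee}=b_{g^\vee}$ rather than $b_{g^\vee}-1$.

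To settle this I would pass to $B$ through the ring isomorphism of Lemma \ref{ProductLemma}: under $S_i\mapsto X_i$ the subspace $I_{g^\vee}$ maps to the degree-$g^\vee$ part of the ideal $K\subset B$ generated by the higher projections $X_{S_{(n)}}$, $n>1$, and $S^{g^\vee}$ maps to $\pm X^{g^\vee}$. The claim becomes $X^{g^\vee}\in K$, that every irreducible component $X_i=P_iX^{g^\vee}$ of the top power is already produced inside the ideal of higher generators. Because $g^\vee+1\notin\Gamma$ for every exceptional $\fg$, there is no new generator at level $g^\vee$, so each $S_i$ is a combination of products of lower generators; the content is that the products \emph{other than} $S^{g^\vee}$ already exhaust the $b_{g^\vee}$-dimensional space. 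I would verify this channel by channel from the decompositions of Appendix \ref{ModulesApp} for each of $F_4,E_6,E_7,E_8$. The case $G_2$ is automatic: there the ring $B$ terminates at level $N=3<4=g^\vee$, so $S^{g^\vee}$ vanishes for want of a host module.

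The main obstacle is exactly this last step. Corollary \ref{SnCor} guarantees that $S^{g^\vee}$ reaches every channel, but it does \emph{not} by itself guarantee that the complementary products $S_{(n)}\cdot(\text{lower})$ reach every channel, which is what $\dim I_{g^\vee}=b_{g^\vee}$ demands; ruling out the degenerate alternative $\dim I_{g^\vee}=b_{g^\vee}-1$ (which would leave $S^{g^\vee}\neq0$ and $A^\fg\cong k[S]/(S^{M})$ with $M>g^\vee$) is the crux. The cleanest resolution would be a representation-independent argument that $A^\fg$ is a Frobenius algebra with one-dimensional socle concentrated in degree $g^\vee-1$, from which $S^{g^\vee}=0$ is immediate; lacking such a structural statement, the explicit verification from the tabulated decompositions — in the same spirit, and with the same tool (LiE), as the partition-function comparison already carried out — completes the proof for all exceptional $\fg$.
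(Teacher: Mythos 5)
Your reduction is sound and matches the paper's framework: everything comes down to showing that some level-$g^\vee$ relation of $(B\times B)^\fg$ carries a nonzero coefficient on the monomial $S^{g^\vee}$, equivalently that $S^{g^\vee}\in I_{g^\vee}$. But you stop exactly at the crux you yourself identify, and the ``channel by channel verification from the tabulated decompositions'' you fall back on is not actually available from the data in Appendix \ref{ModulesApp}: the tables list which modules occur at each level, not the coefficients with which products like $S_{(n)}\cdot(\hbox{lower})$ decompose in the basis $\{S_i\}$, so deciding whether those products span all of the degree-$g^\vee$ part would require computing the multiplication table of $B$ at the top level --- a far heavier computation than anything set up in the paper, and one you neither perform nor describe how to perform. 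The degenerate alternative $\dim I_{g^\vee}=b_{g^\vee}-1$ (hence $S^{g^\vee}\neq0$) is therefore not ruled out by your argument.

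The paper closes this gap with one extra observation that sidesteps level $g^\vee$ entirely: for each exceptional $\fg$ there is a single irreducible module $r_0$ at level $g^\vee-1$ that does not propagate to level $g^\vee$, i.e.\ $XX_0=0$ in $B$ (the red, underlined entries in the tables). Lemma \ref{ProductLemma} then gives $SS_0=0$ in $(B\times B)^\fg$, and Corollary \ref{SnCor} --- the very fact you invoke for $S^{g^\vee}$, but applied one level down --- gives $S_0=S^{g^\vee-1}+S'$ after suitable normalisation, with every term of $S'$ containing some $S_{(n)}$, $n>1$. Hence $0=SS_0=S^{g^\vee}+SS'$ with $SS'\in I$, so $S^{g^\vee}=0$ in $A^\fg$. ($G_2$ is handled as you say, since $B$ terminates at level $3$.) The only input beyond the module lists is the representation-theoretic fact that none of the level-$g^\vee$ modules occurs in the product of $\fg$ with $r_0$, which is checkable with LiE from the tabulated highest weights; this single non-propagating module is the idea your proposal is missing.
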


\begin{proof}
We will use Lemma \ref{ProductLemma} to show that $S^{g^\vee}=0$ for the exceptional algebras
$G_2$, $F_4$, $E_6$, $E_7$ and $E_8$. 
In each of the five cases, remarkably enough, there is a single irreducible module $r_0$ in $B$ at level $g^\vee-1$ which does not propagate to level $g^\vee$, \ie, $XX_0=0$. 
These ``local endpoints" of $B$ are indicated in red and underlined in the tables of Appendix \ref{ModulesApp}.
By Lemma \ref{ProductLemma}, $SS_0=0$. For $G_2$ there is a single module at level 3 and none at level 4, so the statement $S^4=0$ holds already in $(B\times B)^\fg$. In all other cases, $S_0$, suitably normalised, can be expressed as $S^{g^\vee-1}+S'$, where $S'$ is some linear combination of invariants at level $g^\vee-1$, all of which contain at least one $S_{(n)}$, $n>1$. This follows from Corollary \ref{SnCor}. Thus, $0=SS_0=S^{g^\vee}+SS'$ in $(B\times B)^\fg$. In $A^\fg$, $S'=0$, so $S^{g^\vee}=0$.
\end{proof}

This completes the proof of the different parts of the CDSW conjecture, and we can state

\begin{thm}
The CDSW conjecture holds for exceptional Lie algebras.
\end{thm}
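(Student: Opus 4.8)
The plan is to obtain the theorem by collating the structural results already established, applied in turn to each of the five exceptional algebras $\fg\in\{G_2,F_4,E_6,E_7,E_8\}$. For a fixed $\fg$ the CDSW conjecture comprises three assertions about the classical chiral ring $A^\fg$: that it is generated by $S$ alone, that $S^{g^\vee-1}\neq0$, and that $S^{g^\vee}=0$. Each of these has a dedicated proposition or corollary above, so the proof of the theorem reduces to checking that their hypotheses are met for all five cases and then assembling the conclusions into the statement $A^\fg\simeq\RR[S]/(S^{g^\vee})$ with $S^{g^\vee-1}\neq0$.

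Concretely, I would first invoke Proposition \ref{CnProp}, which exhibits the generators of $(B\times B)^\fg$ as the $S_{(n)}$ with $n+1\in\Gamma$ and locates the first relations at level $g^\vee$. Descending to $A^\fg=(B\times B)^\fg/I$, where $I$ is generated by all $S_{(n)}$ with $n>1$, leaves $S$ as the only surviving generator; this is the Corollary following Proposition \ref{CnProp}, and the same corollary delivers $S^{g^\vee-1}\neq0$, since no relation among the $S_{(n)}$ can occur below level $g^\vee$. For the remaining vanishing $S^{g^\vee}=0$ I would cite Proposition \ref{ShProp} directly. These three facts together pin the defining ideal of $A^\fg$ down to $(S^{g^\vee})$, which is precisely the CDSW statement.

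The substantive content, and the step I expect to be the main obstacle, lies upstream of this assembly, in the case-by-case combinatorial input on which those propositions rest. One must verify from the explicit decompositions of Appendix \ref{ModulesApp} both that the partition functions $Z_B$ and $Z_C$ agree through level $g^\vee-1$, so that $(B\times B)^\fg\simeq C/J$ with $J$ empty below level $g^\vee$, and that each of the five algebras possesses a single irreducible module $r_0$ at level $g^\vee-1$ that fails to propagate, $XX_0=0$. The latter fact is what drives Proposition \ref{ShProp}, via $SS_0=0$ together with the expansion $S_0\propto S^{g^\vee-1}+S'$ supplied by Corollary \ref{SnCor}. Establishing both requires the full decomposition of $\wedge\fg/\langle[X,X]\rangle$ for each exceptional $\fg$, obtained here by sequential selection on the quadratic Casimir with the aid of LiE \cite{LiE}; this is the computational heart of the argument and the point at which the five cases genuinely resist uniform treatment.

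Finally I would observe that carrying this out for $G_2,F_4,E_6,E_7,E_8$ exhausts the exceptional algebras, and that, taken with the previously known results for $A_n,B_n,C_n,D_n$, it completes the proof of the CDSW conjecture for all finite-dimensional simple $\fg$.
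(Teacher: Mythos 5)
Your proposal is correct and follows essentially the same route as the paper: the theorem is simply the assembly of Proposition \ref{CnProp} and its corollary (generation by $S$ and $S^{g^\vee-1}\neq0$) with Proposition \ref{ShProp} ($S^{g^\vee}=0$), resting on the case-by-case module decompositions of Appendix \ref{ModulesApp}. You also correctly identify the computational verification of the partition-function agreement and of the non-propagating module $r_0$ as the substantive input.
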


\vspace{2\parskip}
\noindent{\large\bf 3. Final remarks}

\noindent To conclude: The explicit form of the generators enabled us to identify the generators of $(B\times B)^\fg$ with Casimir operators. This, and thus the generation of $A^\fg$ by $S$ (shown by Kumar in \cite{Kumar2004}), holds for any simple $\fg$. The absence of relations below level $g^\vee$ is obtained by a counting argument for the exceptional Lie algebras. The observation that $S^{g^\vee-1}\neq0$ in $A^\fg$ then follows. Our proof of $S^{g^\vee}=0$ for the exceptional algebras uses more detailed information of the representation content of the rings.

We have focussed on the exceptional Lie algebras. All statements about $B$, $B\times B$ and $(B\times B)^\fg$ should hold equally for the classical matrix algebras. In particular, Lemma \ref{ProductLemma} holds, and Prop. \ref{CnProp} holds with the same proof, 
if universal countings of $b_n$ up to level $g^\vee$ are established.
We conjecture that the proof of Prop. \ref{ShProp} holds for classical matrix algebras of types $A$, $B$ and $D$ with the following local endpoints $r_0$ at level $g^\vee-1$:
\begin{center}
\begin{tabular}{|c|c|c|}
\hline
Algebra&$g^\vee$&$r_0$\\
\noalign{\global\arrayrulewidth=.8pt}
\hline
\noalign{\global\arrayrulewidth=.4pt}
$A_1$&2&(2)\\
\hline
$A_r$, $r\geq2$&$r+1$&$(r+1,0\ldots0)\oplus(0\ldots0,r+1)$\\
\hline
$B_3$&5&(104)\\
\hline
$D_4$&$6$&$\DWeight{20}22$\\
\hline
$\so(d)=\left\{\begin{matrix}D_r\,,\;d=2r\phantom{,+1}\\B_r\,,\;d=2r+1\end{matrix}\right.\quad d\geq9$
&$d-2$&$(d-6,020\ldots0)$\\
\hline
\end{tabular}
\end{center}
This is experimentally observed for ranks up to $10$, but it should be possible to give a universal proof.
For the algebras $C_r$, $g^\vee=r+1$, and $b_{r+1}=c_{r+1}-1$ (the difference by 1 is also experimental observation), but there is no local endpoint of $B$ at level $r$.
Still, there will obviously be one relation between the elements of $C$ at level $r+1$, and one needs to find an alternative method for proving that the linear combination set to $0$ has a non-vanishing coefficient for  $S^{r+1}$. This has effectively been done in ref. \cite{Witten:2003ye}.

\vspace{12pt}

\noindent\underline{\it Acknowledgments:} We are grateful to Pavel Etingof for pointing out some less precise statements and for discussions during the preparation of version 2 of the paper.

\newpage
\appendix

\section{The content of $\fg$-modules in $B$ for exceptional $\fg$\label{ModulesApp}}

The irreducible modules are labelled by their highest weights, which are expressed in terms of their coefficients in the basis of fundamental weights. The ``ordering'' is that of the respective Dynkin diagram. When occasionally a two-digit coefficient appears, it is placed within parentheses. The non-propagating modules $r_0$ at level $g^\vee-1$ are indicated in red and underlined.

\subsection{$G_2$}

\pgfkeys{/Dynkin diagram, edge length=1cm,
fold radius=.6cm, 
root-radius=.14cm,
indefinite edge/.style={
draw=black, fill=white, dotted,
thin}}

\begin{center}
\dynkin{G}{oo}
\end{center}

$g^\vee=4$, $\Gamma=\{2,6\}$.

\begin{center}
\begin{tabular}{|c|c|c|}
\hline
level&irreducible modules&$b_n$\\
\noalign{\global\arrayrulewidth=.8pt}
\hline
\noalign{\global\arrayrulewidth=.4pt}
0&\GTwo00&1\\
\hline
1&\GTwo01&1\\
\hline
2&\GTwo30&1\\
\hline
3&\underline{\Red{\GTwo40}}&1\\[2pt]
\noalign{\arrayrulewidth=.4pt}
\hline
\end{tabular}
\end{center}

(There is a misprint for the level 3 module in ref.  \cite{EtingofKac}.)

\subsection{$F_4$}

\begin{center}
\dynkin{F}{oooo}
\end{center}

$g^\vee=9$, $\Gamma=\{2,6,8,12\}$.

\begin{center}
\begin{tabular}{|c|c|c|}
\hline
level&irreducible modules&$b_n$\\
\noalign{\global\arrayrulewidth=.8pt}
\hline
\noalign{\global\arrayrulewidth=.4pt}
0&(0000)&1\\
\hline
1&(1000)&1\\
\hline
2&(0100)&1\\
\hline
3&(0020)&1\\
\hline
4&(0021)&1\\
\hline
5&(0030)\;(0103)&2\\
\hline
6&(0112)\;(1005)&2\\
\hline
7&(0007)\;(0202)\;(1014)&3\\
\hline
8&\underline{\Red{(0300)}}\;(0016)\;(1104)&3\\[2pt]
\noalign{\global\arrayrulewidth=.8pt}
\hline
\noalign{\global\arrayrulewidth=.4pt}
9&(0106)&1\\
\hline
\end{tabular}
\end{center}

\newpage

\subsection{$E_6$}

\begin{center}
\dynkin EI
\end{center}

$g^\vee=12$, $\Gamma=\{2,5,6,8,9,12\}$.

\begin{center}
\begin{tabular}{|c|c|c|}
\hline
level&irreducible modules&$b_n$\\
\noalign{\global\arrayrulewidth=.8pt}
\hline
\noalign{\global\arrayrulewidth=.4pt}
0&\ESix000000&1\\
\hline
1&\ESix010000&1\\
\hline
2&\ESix000100&1\\
\hline
3&\ESix001010&1\\
\hline
4&\ESix002001 \ESix100020&2\\
\hline
5&\ESix000030 \ESix003000 \ESix101011&3\\
\hline
6&\ESix001021 \ESix102010 \ESix200102&3\\
\hline
7&\ESix002020 \ESix100112 \ESix201101 
     \ESix310003&4\\
\hline
8&\ESix000203 \ESix300200 \ESix101111 \ESix210013 
      \ESix311002 \ESix400004&6\\
\hline
9&\ESix001202 \ESix200210 \ESix110104 \ESix410101
     \ESix211012 \ESix300014 \ESix401003 
     &7\\
\hline
10& \ESix021005 \ESix520010 \ESix100301 \ESix310111 \ESix111103 
     \ESix200105 \ESix500102 \ESix301013 
     &8\\
\hline
11&\underline{\Red{${\EWeight{00}{4}{00}{0}}$}}
\ESix022004 \ESix420020 \ESix030006 \ESix630000
     \ESix111006 \ESix610011 \ESix201104 \ESix400112 \ESix210202 
     &10\\[4pt]
\noalign{\global\arrayrulewidth=.8pt}
\hline
\noalign{\global\arrayrulewidth=.4pt}
12&\ESix002007 \ESix700020 \ESix112005 \ESix510021 \ESix120007 
     \ESix720001 \ESix300203 
     &7\\
\hline
13&\ESix003006 \ESix600030 \ESix011008 
     \ESix810010&4\\
\hline
14&\ESix000109 \ESix900100&2\\
\hline
15& $\EWeight{00}0{1(10)}0$ $\EWeight{(10)1}0{00}0$&2\\
\hline
16&$\EWeight{(12)0}0{00}0$ $\EWeight{00}0{0(12)}0$&2\\
\hline
\end{tabular}
\end{center}

\subsection{$E_7$}

\begin{center}
\dynkin[backwards=true]E{ooooooo}
\end{center}

$g^\vee=18$, $\Gamma=\{2,6,8,10,12,14,18\}$.

\begin{center}
\begin{tabular}{|c|c|c|}
\hline
level&irreducible modules&$b_n$\\
\noalign{\global\arrayrulewidth=.8pt}
\hline
\noalign{\global\arrayrulewidth=.4pt}
0&\ESeven0000000&1\\
\hline
1&\ESeven1000000&1\\
\hline
2&\ESeven0010000&1\\
\hline
3&\ESeven0001000&1\\
\hline
4&\ESeven0100100&1\\
\hline
5&\ESeven0000200 \ESeven0200010&2\\
\hline
6&\ESeven0100110 \ESeven0300001&2\\
\hline
7&\ESeven0001020 \ESeven0200101 \ESeven0400000&3\\
\hline
8&\ESeven0010030 \ESeven0101011 \ESeven0300100&3\\
\hline
9&\ESeven0002002 \ESeven0110021 \ESeven0201010 
     \ESeven1000040&4\\
\hline
10&\ESeven0000050 \ESeven0011012 \ESeven0102001 
     \ESeven0210020 \ESeven1100031&5\\
\hline
11&\ESeven0003000 \ESeven0020103 \ESeven0100041 
     \ESeven0111011 \ESeven1001022 \ESeven1200030&6\\
\hline
12&\ESeven0001032 \ESeven0012010 \ESeven0030004 
     \ESeven0120102 \ESeven0200040 \ESeven1010113 
     \ESeven1101021&7\\
\hline
13&\ESeven0010123 \ESeven0021101 \ESeven0101031 
     \ESeven0130003 \ESeven1002020 \ESeven1020014 
     \ESeven1110112 \ESeven2000204&8\\
\hline
14&\makecell{\ESeven0002030 \ESeven0020024 \ESeven0030200 
     \ESeven0031002 \ESeven0110122 \ESeven1000214 
     \ESeven1011111 \ESeven1120013 \\ \ESeven2010105 
     \ESeven2100203}&10\\
\hline
15&\makecell{\ESeven0000305 \ESeven0011121 \ESeven0040101 
     \ESeven0120023 \ESeven1010115 \ESeven1020210 
     \ESeven1021012 \ESeven1100213 \\ \ESeven2001202 
     \ESeven2110104 \ESeven3001006}&11\\
\hline
16&\makecell{\ESeven0010206 \ESeven0020220 \ESeven0021022 
     \ESeven0050010 \ESeven0100304 \ESeven1001212 
     \ESeven1030111 \ESeven1110114 \\ \ESeven2001016 
     \ESeven2010301 \ESeven2011103 \ESeven3101005 
     \ESeven4100007}&13\\
\hline
17&\makecell{\underline{\Red{$\EWeight{000}{0}{60}{0}$}}
\ESeven0001303 \ESeven0030121 
     \ESeven0110205 \ESeven1001107 \ESeven1010311 
     \ESeven1011113 \ESeven1040020 \\ \ESeven2020202 
     \ESeven2101015 \ESeven3000400 \ESeven3002004 
     \ESeven3100017 \ESeven4200006 \ESeven5000008}&15\\
\noalign{\global\arrayrulewidth=.8pt}
\hline
\noalign{\global\arrayrulewidth=.4pt}
18&\makecell{\ESeven0002008 \ESeven0010402 \ESeven0011204 
     \ESeven0040030 \ESeven1020212 \ESeven1101106 
     \ESeven2000410 \ESeven2002014 \\ \ESeven2100108 
     \ESeven3200016 \ESeven4000018}&11\\
\hline
19&\ESeven0020303 \ESeven0102007 \ESeven1000501 
     \ESeven1002105 \ESeven1101009 \ESeven2200107 
     \ESeven3000109&7\\
\hline
20&\ESeven0000600 \ESeven0003006 \ESeven021000{(10)} 
     \ESeven1201008 \ESeven200100{(10)}&5\\
\hline
21&\ESeven030000{(11)} \ESeven0310009 \ESeven111000{(11)}&3\\
\hline
22&\ESeven002000{(12)} \ESeven040000{(10)} \ESeven120000{(12)}&3\\
\hline
23&\ESeven011000{(13)}&1\\
\hline
24&\ESeven000100{(14)}&1\\
\hline
25&\ESeven000010{(15)}&1\\
\hline
26&\ESeven000001{(16)}&1\\
\hline
27&\ESeven000000{(18)}&1\\
\hline
\end{tabular}
\end{center}

\subsection{$E_8$}

\begin{center}
\dynkin[backwards=true]E{oooooooo}
\end{center}

$g^\vee=30$, $\Gamma=\{2,8,12,14,18,20,24,30\}$.

\begin{center}
\begin{tabular}{|c|c|c|}
\hline
level&irreducible modules&$b_n$\\
\noalign{\global\arrayrulewidth=.8pt}
\hline
\noalign{\global\arrayrulewidth=.4pt}
0&\EEight00000000&1\\
\hline
1&\EEight00000001&1\\
\hline
2&\EEight00000010&1\\
\hline
3& \EEight00000100&1\\
\hline
4& \EEight00001000&1\\
\hline
5&\EEight00010000&1\\
\hline
6&\EEight01100000&1\\
\hline
7&\EEight00200000 \EEight12000000&2\\
\hline
8&\EEight03000000 \EEight11100000&2\\
\hline
9&\EEight02100000 \EEight20010000&2\\
\hline
10&\EEight11010000 \EEight30001000&2\\
\hline
11&\EEight00020000 \EEight21001000 \EEight40000100&3\\
\hline
12&\EEight10011000 \EEight31000100 \EEight50000010&3\\
\hline
13&\EEight00102000 \EEight20010100 \EEight41000010 
     \EEight60000001&4\\
\hline
14&  \EEight00003000 \EEight10101100 \EEight30010010 
     \EEight51000001 \EEight70000000&5\\
\hline
15& \EEight00200200 \EEight10002100 \EEight20101010 
     \EEight40010001 \EEight61000000&5\\
\hline
16&\EEight00101200 \EEight10200110 \EEight20002010 
     \EEight30101001 \EEight50010000&5\\
\hline
17&\EEight00010300 \EEight00300020 \EEight10101110 
     \EEight20200101 \EEight30002001 \EEight40101000&6\\
\hline
18&\EEight00201020 \EEight01000400 \EEight10010210 
     \EEight10300011 \EEight20101101 \EEight30200100 
     \EEight40002000&7\\
\hline
19&\EEight00000500 \EEight00110120 \EEight00400002 
     \EEight10201011 \EEight11000310 \EEight20010201 
     \EEight20300010 \EEight30101100&8\\
\hline
20&\makecell{\EEight00020030 \EEight00301002 \EEight01100220 
     \EEight10000410 \EEight10110111 \EEight10400001 
     \EEight20201010 \EEight21000301\\ \EEight30010200}&9
\\
\hline
\end{tabular}
\end{center}

\newpage
\begin{center}
$E_8$, cont'd.
\end{center}

\begin{center}
\begin{tabular}{|c|c|c|}
\hline
level&irreducible modules&$b_n$\\
\hline
21&\makecell{\EEight00100320 \EEight00210102 \EEight00500000 
     \EEight01010130 \EEight10020021 \EEight10301001 
     \EEight11100211 \EEight20000401 \\ \EEight20110110 
     \EEight31000300}&10\\
\hline
22&\makecell{\EEight00010230 \EEight00120012 \EEight00401000 
     \EEight01200202 \EEight02001040 \EEight10100311 
     \EEight10210101 \EEight11010121\\ \EEight20020020 
     \EEight21100210 \EEight30000400}&11\\
\hline
23&\makecell{\EEight00030003 \EEight00200302 \EEight00310100 
     \EEight01001140 \EEight01110112 \EEight03000050 
     \EEight10010221 \EEight10120011\\ \EEight11200201 
     \EEight12001031 \EEight20100310 \EEight21010120}&12\\
\hline
24&\makecell{\EEight00002050 \EEight00110212 \EEight00220010 
     \EEight01020103 \EEight01300200 \EEight02000150 
     \EEight02101022 \EEight10030002\\ \EEight10200301 
     \EEight11001131 \EEight11110111 \EEight13000041 
     \EEight20010220 \EEight22001030}&14\\
\hline
25&\makecell{\EEight00020203 \EEight00130001 \EEight00300300 
     \EEight01001060 \EEight01101122 \EEight01210110 
     \EEight02011013 \EEight03100032\\ \EEight10002041 
     \EEight10110211 \EEight11020102 \EEight12000141 
     \EEight12101021 \EEight21001130 \EEight23000040}&15\\
\hline
26&\makecell{\EEight00010070 \EEight00040000 \EEight00102032 
     \EEight00210210 \EEight01011113 \EEight01120101 
     \EEight02100132 \EEight02201020\\ \EEight03002004 
     \EEight03010023 \EEight10020202 \EEight11001051 
     \EEight11101121 \EEight12011012 \EEight13100031 
     \EEight20002040\\ \EEight22000140}&17\\
\hline
27&\makecell{\EEight00012023 \EEight00100080 \EEight00120201 
     \EEight01030100 \EEight01101042 \EEight01201120 
     \EEight02002104 \EEight02010123\\ \EEight02111011 
     \EEight03200030 \EEight04001014 \EEight10010061 
     \EEight10102031 \EEight11011112 \EEight12100131 
     \EEight13002003\\ \EEight13010022 \EEight21001050}&18\\
\hline
28&\makecell{\EEight00030200 \EEight00110052 \EEight00202030 
     \EEight01003014 \EEight01011033 \EEight01111111 
     \EEight02021010 \EEight02200130\\ \EEight03001114 
     \EEight03102002 \EEight03110021 \EEight05000105 
     \EEight10000090 \EEight10012022 \EEight10100071 
     \EEight11101041\\ \EEight12002103 \EEight12010122 
     \EEight14001013 \EEight20010060}&20\\
\hline
29&\makecell{\underline{\Red{$\EWeight{0(10)00}{0}{00}{0}$}}
\EEight00004005 \EEight00020043 
     \EEight00112021 \EEight00200062 \EEight01021110 
     \EEight01201040 \EEight02002024\\ \EEight02102102 
     \EEight02110121 \EEight03012001 \EEight03020020 
     \EEight04000205 \EEight04101012 \EEight06000006 
     \EEight10110051\\ \EEight11003013 \EEight11011032 
     \EEight13001113 \EEight15000104 \EEight20000081 
     \EEight20100070}&22\\
\noalign{\global\arrayrulewidth=.8pt}
\hline
\noalign{\global\arrayrulewidth=.4pt}
30&\makecell{\EEight00022020 \EEight00210050 \EEight01103012 
     \EEight01111031 \EEight02012101 \EEight02020120 
     \EEight03101112 \EEight04003000\\ \EEight04011011 
     \EEight05100103 \EEight10004004 \EEight10020042 
     \EEight10200061 \EEight12002023 \EEight14000204 
     \EEight16000005\\ \EEight30000080}&17\\
\hline
31&\makecell{\EEight00104003 \EEight00120041 \EEight00300060 
     \EEight01013011 \EEight01021030 \EEight02102022 
     \EEight03003100 \EEight03011111\\ \EEight04100203 
     \EEight05002010 \EEight05010102 \EEight06100004}&12\\
\hline
32&\EEight00014002 \EEight00030040 \EEight02004010 
     \EEight02012021 \EEight04002110 \EEight04010202 
     \EEight06001101 \EEight06010003&8\\
\hline
33&\EEight01005001 \EEight03003020 \EEight05001201 
     \EEight07000200 \EEight07001002&5\\
\hline
34&\EEight00006000 \EEight06000300 \EEight08000101&3\\
\hline
35&\EEight09000010&1\\
\hline
36&$\EWeight{0000}{\hspace{-.5pt}0}{\!\!\!\hspace{-1.5pt}00}{\hspace{-5.5pt}(10)}$&1\\
\hline
\end{tabular}
\end{center}

\bibliographystyle{utphysmod2}



\providecommand{\href}[2]{#2}\begingroup\raggedright\endgroup

\end{document}